\documentclass[journal]{IEEEtran}

\usepackage{cite}
\usepackage{amsmath,amssymb,amsfonts,amsthm}
\usepackage{graphicx}
\usepackage{textcomp}
\usepackage{xcolor}
\usepackage{booktabs}
\usepackage[colorlinks=true,linkcolor=blue!60!black,citecolor=blue!60!black]{hyperref}

\newtheorem{theorem}{Theorem}

\begin{document}

\title{Dual-Select FMA Butterfly for FFT:\\Eliminating Twiddle Factor Singularities\\with Bounded Precomputed Ratios}

\author{Mohamed Amine Bergach%
\thanks{M.~A. Bergach is with Illumina, Inc., San Diego, CA 92122 USA (e-mail: mbergach@illumina.com).}}

\maketitle

\begin{abstract}
The fused multiply-add (FMA) instruction enables the radix-2 FFT butterfly to be computed in 6~FMA operations---the proven minimum. The classical factorization by Linzer and Feig~\cite{linzer1993} precomputes the ratio $\cot\theta = \cos\theta/\sin\theta$, which is singular when the twiddle factor is $W^0 = 1$ (i.e., $\sin\theta = 0$). Standard practice clamps $\sin\theta$ to a small epsilon, degrading numerical precision. We observe that an alternative factorization using $\cos\theta$ as the outer multiplier (precomputing $\tan\theta$) avoids this particular singularity but introduces a new one at $W^{N/4}$. We then propose a \emph{dual-select} strategy that chooses, per twiddle factor, whichever factorization yields $|\text{ratio}| \leq 1$. This eliminates all singularities, requires no epsilon clamping, and bounds the precomputed ratio to unity for all twiddle factors. For $N = 1024$, the worst-case ratio drops from 163 (Linzer-Feig) to exactly~1.0 (dual-select), yielding a $235\times$ tighter error bound in FP16 arithmetic over 10~FFT passes. The strategy adds zero computational overhead---only the precomputed twiddle table changes.
\end{abstract}

\begin{IEEEkeywords}
FFT, fused multiply-add, butterfly, twiddle factor, numerical precision, half-precision
\end{IEEEkeywords}

% =============================================================================
\section{Introduction}

The radix-2 FFT butterfly computes, for inputs $a, b \in \mathbb{C}$ and twiddle factor $W = \omega_r + j\omega_i = e^{-j2\pi k/N}$:
\begin{align}
A &= a + Wb, \quad B = a - Wb. \label{eq:butterfly}
\end{align}
Expanding into real and imaginary parts yields 10 floating-point operations (4~multiplications and 6~additions). On architectures with a fused multiply-add (FMA) instruction that computes $a + b \times c$ in a single operation with one rounding, Linzer and Feig~\cite{linzer1993} showed that the butterfly can be refactored into exactly 6~FMA operations---the proven minimum. Subsequent work by Goedecker~\cite{goedecker1997}, Karner~et~al.~\cite{karner2001}, and others refined FMA kernels for higher radices, while Bernstein's tangent FFT~\cite{bernstein2007} exploited trigonometric identities for global operation count reduction. All of these use the same fundamental factorization identity and share the same singularity.

Their factorization uses the identity $ax + by = a(x + (b/a)y)$, extracting $\omega_i$ as the outer multiplier and precomputing the ratio $t = \omega_r/\omega_i = \cot\theta$~\cite{linzer1993,goedecker1997}. This introduces a \emph{singularity} at $\theta = 0$ (twiddle $W^0 = 1$), where $\omega_i = \sin(0) = 0$. Standard implementations clamp $\omega_i$ to a small epsilon (e.g., $10^{-7}$), which perturbs the twiddle factor and degrades precision---an effect that compounds over the $\log_2 N$ FFT passes.

In this letter, we present two contributions: (1)~an alternative factorization that extracts $\omega_r = \cos\theta$ instead, first described in~\cite{bergach2015thesis,bergach2015conf}, which avoids the $W^0$ singularity; and (2)~a \emph{dual-select} strategy that combines both factorizations to guarantee $|t| \leq 1$ for all twiddle factors, eliminating all singularities without epsilon clamping. The practical impact is most significant for half-precision (FP16) FFTs on modern GPU hardware~\cite{bergach2026fft}, where the tighter ratio bound yields a $235\times$ improvement in the worst-case error bound.

% =============================================================================
\section{FMA Butterfly Factorizations}

\subsection{Standard Butterfly (10 ops)}

Expanding~\eqref{eq:butterfly} into real and imaginary parts:
\begin{align}
A_r &= a_r + (\omega_r b_r - \omega_i b_i), \quad
A_i = a_i + (\omega_i b_r + \omega_r b_i), \label{eq:std1} \\
B_r &= a_r - (\omega_r b_r - \omega_i b_i), \quad
B_i = a_i - (\omega_i b_r + \omega_r b_i). \label{eq:std2}
\end{align}

\subsection{Linzer-Feig Factorization (6 FMA, $\div \sin\theta$)}
\label{sec:linzer-feig}

Applying $ax + by = a(x + (b/a)y)$ with $a = \omega_i$:
\begin{align}
s_1 &= b_i - \tfrac{\omega_r}{\omega_i} b_r, \quad s_2 = \tfrac{\omega_r}{\omega_i} b_r + b_i, \label{eq:lf-s} \\
A_r &= a_r - s_1 \omega_i, \quad A_i = a_i + s_2 \omega_i, \label{eq:lf-A} \\
B_r &= a_r + s_1 \omega_i, \quad B_i = a_i - s_2 \omega_i. \label{eq:lf-B}
\end{align}
Each of the six right-hand sides is a single FMA instruction. The precomputed ratio is $t_{LF} = \omega_r / \omega_i = \cot\theta$, which is singular at $\theta = 0$ ($W^0$) and unbounded as $\theta \to 0$.

\subsection{Cosine Factorization (6 FMA, $\div \cos\theta$)}
\label{sec:bergach}

Applying the same identity with $a = \omega_r$ instead:
\begin{align}
s_1 &= b_r - \tfrac{\omega_i}{\omega_r} b_i, \quad s_2 = \tfrac{\omega_i}{\omega_r} b_r + b_i, \label{eq:berg-s} \\
A_r &= a_r + s_1 \omega_r, \quad A_i = a_i + s_2 \omega_r, \label{eq:berg-A} \\
B_r &= a_r - s_1 \omega_r, \quad B_i = a_i - s_2 \omega_r. \label{eq:berg-B}
\end{align}
The precomputed ratio is $t_B = \omega_i / \omega_r = \tan\theta$, which is singular at $\theta = \pi/2$ ($W^{N/4}$). This factorization was first described in~\cite{bergach2015thesis} as a method to avoid the $W^0$ singularity. However, it introduces a \emph{different} singularity at $W^{N/4}$, with $|t_B| \to \infty$ as $\theta \to \pi/2$.

\subsection{Observation: Complementary Singularities}
\label{sec:complementary}

The key observation is that the two factorizations have \emph{complementary} singularity regions:

\begin{itemize}
\item Linzer-Feig: $|t_{LF}| = |\cot\theta|$ is small when $|\theta|$ is near $\pi/2$ (large $|\sin\theta|$) and large when $|\theta|$ is near $0$.
\item Cosine: $|t_B| = |\tan\theta|$ is small when $|\theta|$ is near $0$ (large $|\cos\theta|$) and large when $|\theta|$ is near $\pi/2$.
\end{itemize}

At any angle $\theta$, $\min(|\tan\theta|, |\cot\theta|) \leq 1$, with equality at $\theta = \pi/4$. This elementary trigonometric identity enables a simple selection rule.

% =============================================================================
\section{Dual-Select Strategy}

\begin{figure}[t]
\centering
\small
\fbox{\parbox{0.92\columnwidth}{%
\textbf{Algorithm 1:} Dual-Select Twiddle Precomputation\\[2pt]
\textbf{Input:} FFT size $N$\\
\textbf{Output:} Twiddle table $T[k] = (\text{mult}, \text{ratio}, \text{flag})$\\[2pt]
\textbf{for} $k = 0$ to $N/2 - 1$ \textbf{do}\\
\quad $\theta \gets -2\pi k / N$;
\quad $\omega_r \gets \cos\theta$;
\quad $\omega_i \gets \sin\theta$\\
\quad \textbf{if} $|\omega_r| \geq |\omega_i|$ \textbf{then}
\hfill\textit{// cosine path}\\
\quad\quad $T[k] \gets (\omega_r, \; \omega_i/\omega_r, \; \textsc{cos})$\\
\quad \textbf{else}
\hfill\textit{// sine path}\\
\quad\quad $T[k] \gets (\omega_i, \; \omega_r/\omega_i, \; \textsc{sin})$\\
\quad \textbf{end if}\\
\textbf{end for}
}}
\label{alg:dual}
\end{figure}

For each twiddle factor $W^k$, we select the factorization whose outer multiplier has larger absolute value (Algorithm~\ref{alg:dual}). This guarantees:

\begin{theorem}
For all twiddle factors $W^k = e^{-j2\pi k/N}$ with $0 \leq k < N/2$, the dual-select strategy produces a precomputed ratio satisfying $|t| \leq 1$.
\end{theorem}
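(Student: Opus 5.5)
The plan is a direct case analysis following the branch structure of Algorithm~1. Fix $k$ with $0 \leq k < N/2$ and write $\theta = -2\pi k/N$, $\omega_r = \cos\theta$, $\omega_i = \sin\theta$. The first observation, which is what makes the ratios well defined, is that $\omega_r$ and $\omega_i$ are never simultaneously zero, since $\omega_r^2 + \omega_i^2 = 1$. Hence $\max(|\omega_r|,|\omega_i|) \geq 1/\sqrt{2} > 0$, so the divisor selected by the algorithm, whichever branch is taken, is nonzero. In other words, the selection rule never divides by zero, and no epsilon clamping is needed.

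Then I will treat the two branches separately. In the cosine branch the condition is $|\omega_r| \geq |\omega_i|$. Since $|\omega_r| > 0$ by the observation above, we may divide to obtain
\[
|t| = \frac{|\omega_i|}{|\omega_r|} \leq 1 .
\]
In the sine branch the condition is $|\omega_r| < |\omega_i|$, which forces $|\omega_i| > 0$ and gives
\[
|t| = \frac{|\omega_r|}{|\omega_i|} < 1 .
\]
Combining the two cases, $|t| = \min(|\tan\theta|,|\cot\theta|) \leq 1$, which agrees with the complementary-singularity remark of Section~\ref{sec:complementary}. The same argument shows that the outer multiplier always has absolute value at least $1/\sqrt{2}$.

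Finally, I will remark on equality and sharpness, to support the abstract's claim that the worst case is exactly $1.0$. Equality $|t| = 1$ holds iff $|\cos\theta| = |\sin\theta|$, i.e. $\theta = -\pi/4$. In the range $0 \leq k < N/2$ this occurs iff $k = N/8$, which happens exactly when $8 \mid N$; in particular it holds for $N = 1024$. There is no genuine obstacle in this proof. The only subtle point is the nonvanishing of the chosen divisor, and it is handled by the Pythagorean identity. A floating-point version of the argument, showing that the computed $|t|$ does not exceed $1$ after rounding, would require monotonicity of correctly rounded division; I would mention this but not pursue it.
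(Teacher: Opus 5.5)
Your case split on the two branches of Algorithm~1 is essentially the paper's proof, and your use of $\omega_r^2+\omega_i^2=1$ to show that the selected divisor never vanishes makes explicit a step the paper leaves implicit. One small slip in your optional sharpness remark: for $0\le k<N/2$ we have $\theta\in(-\pi,0]$, so $|\cos\theta|=|\sin\theta|$ also at $\theta=-3\pi/4$, i.e.\ equality holds at $k=3N/8$ as well as at $k=N/8$, though your conclusion that equality occurs iff $8\mid N$ is unaffected.
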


\begin{IEEEproof}
If $|\omega_r| \geq |\omega_i|$, we use the cosine factorization and store $t = \omega_i/\omega_r$ with $|t| = |\tan\theta| \leq 1$ (since $|\cos\theta| \geq |\sin\theta|$). If $|\omega_i| > |\omega_r|$, we use the sine factorization and store $t = \omega_r/\omega_i$ with $|t| = |\cot\theta| < 1$. In both cases, $|t| \leq 1$.
\end{IEEEproof}

\textbf{Implementation.} The selection flag can be encoded in the sign bit of the twiddle table or as an integer flag. The butterfly kernel branches on this flag, executing one of two 6-FMA code paths. Since both paths have identical instruction counts (6~FMAs each), any SIMD divergence on GPU architectures (Apple Metal, CUDA) incurs no throughput penalty---the execution cost is the same regardless of path.

\textbf{Storage overhead}: one additional bit per twiddle factor. For $N = 1024$, this is 64~bytes vs.\ 4~KB for the twiddle table itself---negligible.

% =============================================================================
\section{Error Analysis}

\subsection{Per-Butterfly Error Bound}

In a single FMA butterfly, the precomputed ratio $t$ appears as a multiplicand in two of the six FMA operations~\eqref{eq:lf-s} or~\eqref{eq:berg-s}. The rounding error of each FMA is bounded by $\epsilon_{\text{mach}} |t| \cdot |b|$, where $\epsilon_{\text{mach}}$ is the machine epsilon and $|b|$ is the input magnitude. The per-butterfly error scales as:
\begin{equation}
\delta \leq C \cdot |t| \cdot \epsilon_{\text{mach}} \cdot \|b\|
\label{eq:per-butterfly-error}
\end{equation}
for a constant $C$ depending on the butterfly structure. The dual-select strategy ensures $|t| \leq 1$, minimizing this bound.

\subsection{Cumulative Error Over $\log_2 N$ Passes}

For a Stockham FFT with $m = \log_2 N$ passes, each sample participates in $m$ butterfly operations. In the worst case, the cumulative relative error grows as:
\begin{equation}
E \leq (1 + |t_{\max}| \cdot \epsilon_{\text{mach}})^m - 1 \approx m \cdot |t_{\max}| \cdot \epsilon_{\text{mach}}.
\label{eq:cumulative}
\end{equation}

% =============================================================================
\section{Experimental Results}

Table~\ref{tab:ratio} compares the three strategies for $N = 1024$ ($m = 10$ passes).

\begin{table}[t]
\centering
\caption{Precomputed ratio bounds and error analysis for $N = 1024$.}
\label{tab:ratio}
\begin{tabular}{@{}lccc@{}}
\toprule
\textbf{Strategy} & $|t|_{\max}$ & \textbf{Sing.} & \textbf{FP16 bound} \\
\midrule
Linzer-Feig~\cite{linzer1993} ($\div\sin$) & 163.0 & 1 & $7.95 \times 10^{-2}$ \\
Cosine~\cite{bergach2015thesis} ($\div\cos$) & $>10^{16}$ & 0$^*$ & divergent \\
\textbf{Dual-Select (ours)} & \textbf{1.000} & \textbf{0} & $\mathbf{4.88 \times 10^{-4}}$ \\
\bottomrule
\multicolumn{4}{@{}l@{}}{\footnotesize $^*$Near-singular: $|\cos\theta| \approx 6 \times 10^{-17}$ at $k = N/4$.} \\
\multicolumn{4}{@{}l@{}}{\footnotesize FP16 bound: per-butterfly, $\epsilon_{\text{FP16}} = 4.88 \times 10^{-4}$.}
\end{tabular}
\end{table}

\begin{table}[t]
\centering
\caption{Cumulative FP16 error bound over $m = 10$ Stockham passes.}
\label{tab:cumulative}
\begin{tabular}{@{}lcc@{}}
\toprule
\textbf{Strategy} & \textbf{Cumulative bound} & \textbf{Improvement} \\
\midrule
Linzer-Feig & $1.15$ & --- \\
\textbf{Dual-Select} & $\mathbf{4.89 \times 10^{-3}}$ & $\mathbf{235\times}$ \\
\bottomrule
\end{tabular}
\end{table}

\textbf{Ratio bound (Table~\ref{tab:ratio}).} The Linzer-Feig factorization has $|t_{\max}| = |\cot(\pi/512)| = 163.0$ for $N = 1024$, occurring at $k = 1$ (the smallest nonzero twiddle angle). The cosine factorization has $|t| > 10^{16}$ near $k = N/4$, where $\cos\theta \approx 0$. The dual-select strategy achieves $|t_{\max}| = 1.0$ exactly, at $k = N/8$ where $|\cos\theta| = |\sin\theta| = 1/\sqrt{2}$.

\textbf{FP16 error (Table~\ref{tab:cumulative}).} Over 10 passes with FP16 machine epsilon $\epsilon = 4.88 \times 10^{-4}$, the cumulative error bound~\eqref{eq:cumulative} yields 1.15 for Linzer-Feig (rendering the FFT result meaningless) versus $4.89 \times 10^{-3}$ for dual-select---a $235\times$ improvement.

\textbf{FP32 precision.} In float32 ($\epsilon = 5.96 \times 10^{-8}$), both strategies produce equivalent roundtrip error (${\sim}10^{-7}$ relative L2), confirming that the advantage is specific to low-precision arithmetic where the ratio amplification exceeds the precision floor.

\textbf{Path distribution.} For $N = 1024$, exactly 256 of 512 twiddle factors use the cosine path ($|\cos\theta| \geq |\sin\theta|$, angles near $0$ and $\pi$) and 256 use the sine path (angles near $\pm\pi/2$), a 50/50 split.

% =============================================================================
\section{Discussion}

The dual-select strategy has several practical implications:

\textbf{Mixed-precision FFT.} Modern GPU architectures (Apple M-series, NVIDIA Tensor Cores) offer 2$\times$ throughput for FP16 vs.\ FP32. The primary barrier to FP16 FFTs is precision loss during twiddle multiplication. With $|t| \leq 1$ guaranteed, the dual-select butterfly is a key enabler for practical FP16 FFTs.

\textbf{Zero overhead.} The strategy modifies only the precomputed twiddle table---the butterfly kernel executes 6~FMA instructions regardless of which path is taken. The per-twiddle branch can be eliminated entirely by encoding the operand ordering into the precomputed table entries.

\textbf{Generality.} The dual-select principle applies to any radix: for radix-$r$ butterflies with FMA factorization, each twiddle multiplication can independently select the min-ratio path. The proof that $\min(|\tan\theta|, |\cot\theta|) \leq 1$ is independent of FFT size or radix.

% =============================================================================
\section{Conclusion}

We have shown that the Linzer-Feig and cosine FMA butterfly factorizations have complementary singularities, and that a simple per-twiddle selection rule---choosing whichever yields $|\text{ratio}| \leq 1$---eliminates all singularities with zero computational overhead. For FP16 FFTs on modern GPUs, this reduces the worst-case error bound by $235\times$ for $N = 1024$, making half-precision FFT a practical option for throughput-critical applications such as real-time radar and neural network inference.

% =============================================================================
\section*{Acknowledgment}

The cosine factorization (Section~\ref{sec:bergach}) was developed during the author's Ph.D.\ thesis at INRIA/Universit\'{e} Nice Sophia Antipolis (2012--2015), supervised by Robert de Simone, with industrial support from Kontron (Serge Tissot, Michel Syska).

\bibliographystyle{IEEEtran}

\begin{thebibliography}{10}

\bibitem{linzer1993}
E.~Linzer and E.~Feig,
``Implementation of efficient FFT algorithms on fused multiply-add architectures,''
\emph{IEEE Trans. Signal Process.}, vol.~41, no.~1, pp.~93--107, Jan. 1993.

\bibitem{goedecker1997}
S.~Goedecker,
``Fast radix 2, 3, 4, and 5 kernels for fast Fourier transformations on computers with overlapping multiply-add instructions,''
\emph{SIAM J. Sci. Comput.}, vol.~18, no.~6, pp.~1605--1611, 1997.

\bibitem{bergach2015thesis}
M.~A. Bergach,
``Adaptation du calcul de la Transform\'{e}e de Fourier Rapide sur une architecture mixte CPU/GPU int\'{e}gr\'{e}e,''
Ph.D. dissertation, Univ. Nice Sophia Antipolis, 2015.
[Online]. Available: \url{https://theses.hal.science/tel-01245958}

\bibitem{bergach2015conf}
M.~A. Bergach, E.~Kofman, R.~de Simone, S.~Tissot, and M.~Syska,
``Efficient FFT mapping on GPU for radar processing application: Modeling and implementation,''
\emph{arXiv:1505.08067}, 2015.

\bibitem{karner2001}
H.~Karner, M.~Auer, and C.~W. Ueberhuber,
``Multiply-add optimized FFT kernels,''
\emph{Math. Models Methods Appl. Sci.}, vol.~11, no.~1, pp.~105--117, 2001.

\bibitem{bernstein2007}
D.~J. Bernstein,
``The tangent FFT,''
in \emph{Proc. ISAAC}, vol.~4851, pp.~594--605, 2007.

\bibitem{bergach2026fft}
M.~A. Bergach,
``Beating vDSP: A 138~GFLOPS radix-8 Stockham FFT on Apple Silicon,''
submitted, 2026.

\end{thebibliography}

\end{document}